\newcommand{\ra}{\rightarrow}
\newcommand{\pr}{\prime}
\newcommand{\de}{\partial}
\newcommand{\te}{\theta}
\newcommand{\C}{\mathbb{C}}
\newcommand{\Q}{\mathbb{Q}}
\newcommand{\R}{\mathbb{R}}
\newcommand{\Z}{\mathbb{Z}}
\newcommand{\abs}[1]{\left\lvert #1 \right\rvert}
\newcommand{\lbar}[1]{\overline{#1}}
\newcommand{\id}{\mathrm{id}}
\newtheorem{thm}{Theorem}
\newtheorem{lemma}{Lemma}
\theoremstyle{definition}
\newtheorem{defin}{Definition}
\newtheorem*{defin*}{Definition}
\theoremstyle{remark}
\newtheorem{example}{Example}
\newenvironment{customex}[1]
	{\innercustomex}
	{\endinnercustomex}
\begin{document}

\title{Entanglement of Sections, Examples Looking for a Theory}

\author{Michael H. Freedman}
\address{\hskip-\parindent
	Michael H. Freedman \\
	Microsoft Research, Station Q, and Department of Mathematics \\
	University of California, Santa Barbara \\
	Santa Barbara, CA 93106
}

\author{Matthew B. Hastings}
\address{\hskip-\parindent
	Matthew B. Hastings \\
	Microsoft Research, Station Q \\
	Santa Barbara, CA 93106
}

\begin{abstract}
    Quantum information is about the entanglement of states. To this starting point we add parameters whereby a single state becomes a non-vanishing section of a bundle. We consider through examples the possible entanglement patterns of sections.
\end{abstract}

\maketitle

\section{Examples: Entanglement Constraints on Sections of Tensor Product Bundles}
Consider the elementary subject of finite dimensional linear algebra, perhaps with the vector spaces dressed with an Hermitian inner product, so as to become finite dimensional Hilbert spaces. It has two well known enhancements. If we add parameters we get bundle theory (in the dressed case the structure group reduces from GL$(n,\C)$ to U$(n)$). If we add fixed tensor structures to the Hilbert spaces---so as to have pieces under the respective control of Alice, Bob, Claire, etc., we get quantum information theory. This note advertises the ``push out'' of these two enhancements. Should we be doing quantum information in families? Or when we meet a tensor product bundle, should we ask not just about its subbundles and (non-vanishing) sections, but about their entanglement properties as well?

This note is inventing, and asking its readers to help invent the upper right corner of this ``push out.''

\begin{figure}[htp]
    \centering
    \begin{tikzpicture}
        \node at (0,0) {Quantum information};
        \draw[->] (2,0) -- (5,0);
        \node at (7,0) {Quantum $K$-theory?};
        \node at (0,-2) {Linear algebra};
        \draw[->] (1.5,-2) -- (5.5,-2);
        \node at (3.5,-2.3) {\footnotesize{Add parameters}};
        \node at (7,-2) {Bundle theory};
        \draw[->] (0,-1.5) -- (0,-0.5);
        \node at (-0.9,-0.8) {\footnotesize{Add tensor}};
        \node at (-0.9,-1.2) {\footnotesize{structure}};
        \draw[->] (7,-1.5) -- (7,-0.5);
    \end{tikzpicture}
\end{figure}

Let us mention two contexts where the quantum world is less homogeneous than the classical. First, if $\mathcal{H} \cong H_1 \otimes \cdots \otimes H_k$ the unit vectors, ``states'' of $\mathcal{H}$ do not all look the same. The symmetry group $\operatorname{U}(d_1) \times \cdots \times \operatorname{U}(d_k)$, $d_i = \dim(\mathcal{H}_i)$, is much smaller than U$(d)$, $d = \prod_{i=1}^k d_i$. Different directions have vastly different entanglement properties. Second, consider, for simplicity, the case where all $d_i = 2$, so $d = 2^k$. The Lie algebra of U$(2^k)$ is spanned by $i$ times the ``Pauli words'': $\{\sigma_i \otimes \cdots \otimes \sigma_{i_k}\}$ where $i_j = 0,1,2,$ or 3, $1 \leq j \leq k$, and $\sigma_0 = \id$, $\sigma_1 = X = \begin{vmatrix}
    0 & 1 \\ 1 & 0
\end{vmatrix}$, $\sigma_2 = Y = \begin{vmatrix}
    0 & -1 \\ i & 0
\end{vmatrix}$, and $\sigma_3 = \begin{vmatrix}
    1 & 0 \\ 0 & -1
\end{vmatrix}$. Except for the identity, the $4^k-1$ remaining Pauli words which span the simple Lie algebra SU$(2^k)$ are all iso-spectral, and thus mutually equivalent under conjugation, i.e.\ the adjoint representation. However, from a quantum perspective, they are very different. The number $w$, called the weight, $1 \leq w \leq k$, of non-identity Pauli matrices in the Pauli word tells us how many, $w$, bodies are coupled by that Hamiltonian. So in a quantum treatment is has been proposed \cite{bs18} that one should study left invariant metrics, e.g.\ $g_{ij} = e^{w_i} \delta_{ij}$, diagonal in the Pauli-word basis, which reflect this distinction, rather than the usual Killing form $\langle H_1, H_2 \rangle :: \operatorname{tr}(\operatorname{ad} H_1 \circ \operatorname{ad} H_2)$, which does not. These inhomogenieties, so important in quantum information, should not be lost sight of simply because one has a problem with continuous parameters, but instead should be accounted for in a theory of tensor bundles.

We can think of two ways, there may be others, in which quantum problems come with continuous parameters. First, in periodic systems, the dof in the unit cell constitute the fiber of a bundle over the momentum torus, or Brillion zone. This point of view was important in understanding the ``10-fold way'' a topological classification of free fermion states \cites{kit09,hastings13}. Second, one might consider the configuration space of experimental conditions for a system $F_p$, with parameter $p$. The effective configuration space is one where gauge equivalent parameter settings (e.g.\ multiplies of a flux quantum) are identified, and so could have a rich topology.

These are our motivations for studying the upper-right corner. What we present next are ad hoc, but perhaps exemplary, calculations, useful to initiate this discussion. What made $K$-theory a powerful tool 60 years ago were the \emph{regularities}, such as Bott periodicitity, that classical calculations uncovered. As yet, in the quantum case, we see no over-riding regular patterns, but propose searching for them. This section is organized around five examples, which can serve as a starting point.

We give examples of tensor products of vector bundles where the factor bundles $A$ and $B$ (also $A$, $B$, and $C$ in later examples) have no nonvanishing sections, yet the tensor product $P^\pr = A \otimes B$ (and $P = A \otimes B \otimes C$) does admit a nonvanishing section. So far, this is no surprise as tensor products multiply fiber dimension, and when the real dimension of the fiber exceeds that of the base, there will always be a nonvanishing section. What is interesting is that much can be said about the patterns of entanglement present in such nonvanishing sections. We give examples of $P^\pr$ and $P$ as above where any nonvanishing section $\Gamma$ must at some point $x \in X$ of the base have nongeneric entanglement. The constraints on entanglement may be thought of as ``quantum characteristic classes'' although it is not cohomological in the usual sense. This interpretation will be discussed later. MF would like to thank Peter Teichner for insightful discussions.

As a warm up, consider two nontrivial (complex) vector bundles $A$ and $B$ over a base $X$. Further suppose that $A$ and $B$ admit no nonvanishing section. Now consider $P^\pr = A \otimes B$. To stick to the most familiar setting, let us assume that all bundles have unitary structure group $\operatorname{U}(n)$, when the fiber is $\C^n$. This can be easily relaxed to $\operatorname{GL}(n,\C)$, but there is no present need. The fibers of $P^\pr$, isomorphic to $\C^a \otimes \C^b$ (here $a$ and $b$ are the fiber dimensions of $A$ and $B$, respectively) are not canonically identified with $\C^a \otimes \C^b$, but the identification \emph{is} canonical up to the product action of $\operatorname{U}(a) \times \operatorname{U}(b)\slash \operatorname{diag} \mathrm{U}(1)$ on the respective factors. This enables us to speak unambiguously about the degree of entanglement of a nonvanishing section $\Gamma(x)$ of $P^\pr$ at every point $x \in X$ in the base, since by definition measures of entanglement are unchanged by local unitary (LU) operators.

\begin{example}
    We use the 2-sphere $S^2 \cong \C P^1$ as our base space. Let $A$ be the complex line bundle with $c_1(A) = 1 \in H^2(S^2; \Z)$ and $B$ be the complex conjugate of $A$, $B = \lbar{A}$, so $c_1(B) = - 1 \in H^2(S^2;\Z)$. The line bundle $A \otimes B \cong \operatorname{Hom}(A,A)$ is canonically trivial since the structure group $\operatorname{U}(1)$ is abelian. More concretely, it is well-defined to rotate any fiber of $A$ by multiplying by $e^{2\pi i \te}$. Thus $A \otimes B$ has a nonvanishing section: the identity rotation over each point $x \in S^2$. Since the fiber dimensions are one, it is not meaningful to ask about entanglement of the sections: there can be none.
\end{example}

All sections, discussed in these notes, are assumed to be nowhere vanishing. The preceeding example motivates the definition below.

Suppose a bundle has a fixed tensor factorization $P^\pr = A \otimes B$ over base $X$ (or $P = A \otimes B \otimes C$, etc.). We say a section is \emph{simple} (i.e.\ tensor-rank 1) if it has the form $\Gamma(x) = \vert a \rangle \vert b \rangle$ (or $\Gamma(x) = \vert a \rangle \vert b \rangle \vert c \rangle$, etc.). Any simple section induces projective sections $\vert \hat{a}(x) \rangle \coloneqq \hat{\Gamma}_A(x) \in \hat{A}$, $\vert \hat{b}(x) \rangle \coloneqq \hat{\Gamma}_B(x) \in \hat{B}$, etc., where $\hat{A}$ ($\hat{B}$, ...) is obtained by dividing all unit fibers by phase, yielding new $\C P^\ast$-bundles. For each projective bundle $\hat{A}$ ($\hat{B}$, ...) there is a first Chern class $c_{\hat{a}}$ ($c_{\hat{b}}$, ...), the first Chern class of the $\operatorname{U}(1)$ bundle of phases that was divided out in passing from $A$ to $\hat{A}$ ($B$ to $\hat{B}$, ...) which is the unique obstruction lying in $H^2(X; \Z)$ to lifting $\vert \hat{a} \rangle$ ($\vert \hat{b} \rangle$, ...) to a section $\vert a \rangle$ ($\vert b \rangle$, ...) of $A$ ($B$, ...). Since $\vert a \rangle \vert b \rangle$ was originally a section of $A \otimes B$, the obstruction sums to zero:
\begin{equation}
    c_{\hat{a}} + c_{\hat{b}} = 0
\end{equation}

\begin{defin}\label{def:untwisted}
    If the individual Chern class obstructions $c_A$ and $c_B$ both vanish, we say $\Gamma(x)$ is \emph{untwisted}. When $P = A \otimes B \otimes C$, a section of $\Gamma(x)$ will be called \emph{untwisted} iff $\Gamma(x) \in V_2^\pr$, and where $\Gamma(x) \in \lbar{A - BC}$ ($\lbar{B - CA}$ or $\lbar{C - AB}$), bar denoting closure, and $c_{\hat{a}} = 0 = c_{\hat{b}\hat{c}}$ ($c_{\hat{b}} = 0 = c_{\hat{c}\hat{a}}$, $c_{\hat{c}} = 0 = c_{\hat{a}\hat{b}}$). It follows from additivity (line 1) that where $\Gamma(x) \in A - B - C$, $c_{\hat{a}} = c_{\hat{b}} = c_{\hat{c}} = 0$.
\end{defin}

In Definition \ref{def:untwisted} we have used the notation $A - B - C$ for vectors (sections) that are simple 3-fold products, $A - BC$ for the section of the form $\vert a \rangle \otimes \vert \phi_{BC}\rangle$, i.e.\ simple w.r.t.\ the first factor and the last two combined, etc. $V_2^\pr$, the strata of at most bipartite entanglement, is defined as $A-B-C \cup A-BC \cup B-AC \cup C - AB$.

\begin{example}\label{ex:s4}
    Let $S^4 \cong QP^1$ be the base and let $A$ be the complex 2-plane bundle with $c_2(A) = 1 \in H^4(S^4;\Z)$ and again $B = \lbar{A}$. Now $P^\pr = A \otimes B \cong \operatorname{Hom}(A,A)$ is nontrivial. This may be checked by computing its Chern character to be $1 + 2(\text{generator}) \text{ of } H^4$. The Chern character is a ring isomoprhism from $KU \otimes Q \ra \oplus_{i \text{ even}} H^i(;\Q)$ from complex $K$-theory tensor $Q$ to the even rational cohomology. Since the fiber of $P^\pr$ has 8 real dimension, larger than the dimension 4 of the base, $P^\pr$ certainly has a nonvanishing section. We can ask: Can such a section $\Gamma$ be unentangled? That is, can $\Gamma(x)$ be a tensor-rank 1 vector for all $x \in S^4$? Let's imagine writing $\Gamma(x) = v_x \otimes w_x$, where $v \in A_x$ and $w \in B_x$, the respective fibers. The only ambiguity in the choice of $v_x$ and $w_x$ is a phase; for any $\te$ we may alter the choice: $v_x \mapsto e^{2\pi i \te} v_x$, $w_x \mapsto e^{-2\pi i \te}w_x$. But letting $\hat{v}_x$ and $\hat{w}_x$ lie in the projective spaces $(A_x \setminus 0)\slash \text{phase}$ and $(B_x \setminus 0) \slash \text{phase}$, there is no longer any ambiguity. Thus, a supposed tensor-rank 1 section $\Gamma(x)$ would induce section of the projectivized bundles $\hat{A}$ and $\hat{B}$:

    \begin{figure}[ht]
        \centering
        \begin{tikzpicture}[baseline=-5.8ex]
            \node at (0,0) {$\C P^3$};
            \draw [->] (0.45,0) -- (0.85,0);
            \node at (1.1,0) {$\hat{A}$};
            \draw [->] (1.1,-0.3) -- (1.1,-0.75);
            \node at (1.1,-1) {$S^4$};
            \draw[->] (1.4,-0.9) to[out=60,in=-60] (1.4,-0.1);
            \node at (1.75,-0.5) {\footnotesize{$\hat{v}_x$}};

            \node at (2.4,-1.2) {,};

            \node at (3.2,0) {$\C P^3$};
            \draw [->] (3.65,0) -- (4.05,0);
            \node at (4.3,0) {$\hat{B}$};
            \draw [->] (4.3,-0.3) -- (4.3,-0.75);
            \node at (4.3,-1) {$S^4$};
            \draw[->] (4.6,-0.9) to[out=60,in=-60] (4.6,-0.1);
            \node at (5,-0.5) {\footnotesize{$\hat{w}_x$}};
        \end{tikzpicture}
    \end{figure}

    But this cannot be; the obstruction to lifting a section of $\hat{A}$ ($\hat{B}$) to a section of $A$ ($B$) is an element $c_{\hat{a}}\ (c_{\hat{b}}) \in H^2(S^4;\Z) \cong 0$, and hence vanishes. A rank one section $\Gamma(x)$ of $A \otimes B$ would lead to section(s) of $A$ ($B$), a contradiction. So while $A \otimes B$ admits a nonvanishing section, any such must have tensor-rank $>1$ at some points of the base, i.e., be entangled.
\end{example}

\begin{customex}{2\textprime}
    Let $A$ be the $\C^2$-bundle over $S^4 \cong QP^1$, the generalized Hopf bundle already considered in Example \ref{ex:s4}. This time, consider the tensor square $A \otimes A$. By essentially the same reasoning as above, $A \otimes A$ contains no simple (tensor-rank 1) section. Note that $A \otimes A \cong \Lambda^2(A) \oplus \mathrm{sym}^2(A)$ decomposes into a direct sum of a 1D skew-symmetric (``singlet'') piece and a 3D symmetric piece. Since $H^2(S^4;\Z) \cong 0$, $\Lambda^2(A)$ has a non-vanishing entangled section, of the form $\vert0\rangle \vert1\rangle - \vert 1 \rangle \vert 0 \rangle$ at each point.
\end{customex}

Now let us turn to the question of which entanglement patterns can be seen in an arbitrary section of $\mathrm{sym}^2(A)$. For two qubits the most natural invariant, up to local unitary transformations (LU), is the von Neumann entropy $S(\psi)$. To choose a normalized form, $\psi = \cos \te \vert 0 \rangle \vert 0 \rangle + \sin \te \vert 1 \rangle \vert 1 \rangle$, one checks that $S(\psi) = \cos \te \log(\cos \te) + \sin \te \log(\sin \te)$. Note that in considering symmetric states $\psi$ up to LU-equivalene, we intentionally break this symmetry by allowing U$(2) \times \operatorname{U}(2) \slash \operatorname{U}(1)$ to act, that is, U$(2)$ acts independently on the two factors. By the argument of Example \ref{ex:s4}, which used $H^2(S^4;\Z) \cong 0$, any non-vanishing section $\psi$ of $\mathrm{sym}^2(A)$ must have an entangled vector $\psi(s)$ for some $s \in S^4$ which we call the ``south pole.'' Under LU-equivalence, according to the Schmidt decomposition, $\psi(s)$ will assume a form: $\psi(s) = \vert 0 \rangle \vert 0 \rangle + t \vert 1 \rangle \vert 1 \rangle$, $0 < t \leq 1$, up to an overall real normalization $\frac{1}{\sqrt{1 + t^2}}$, which we drop. It is now convenient to projectivize all bundles and sections, so we write:
\begin{equation}
    \begin{tikzpicture}[baseline=(current bounding box.center)]
        \node at (0,0) {$\mathrm{CP}^2 \ra P(\mathrm{sym}^2(A))$};
        \draw[->] (0.5,-0.35) -- (0.5,-0.8);
        \node at (0.5,-1.2) {$s \subset S^4$};
        \draw[->] (-0.2,-1) to[out=120,in=240] (-0.2,-0.3);
        \node at (-0.8,-0.65) {\footnotesize{$P(\psi)$}};
    \end{tikzpicture}
\end{equation}

This projective bundle is of course trivial over the contractible northern patch $S^4 \setminus s$, so we may write $P(\psi)$ as a function, which, abusing notation slightly, we still call $P(\psi): (D^4,S^3) \ra \C P^2$, where $S^3$ is the completed ``infinity'' of $S^4 \setminus s$. The boundary values of $P(\psi)$ on $S^3$ are determined by the symmetric square of the clutching function $c$ for $A$. $c$ may be taken to be any orientation preserving diffeomorphism: $S^3 \xrightarrow{c} \operatorname{SU}(2)$, recalling that $A$ is the generalized Hopf bundle.

Next, we should describe some internal structure of $P(\operatorname{sym}^2(\C^2)) \cong P(\C^3) \cong \C P^2$, with its use in understanding the map $c$, above. First, consider the degree 2 curve, $C_2$, in $P(\operatorname{sym}^2(\C^2))$, consisting of \emph{unentangled} triplets of $\operatorname{sym}^2(\C^2)$. Topologically, $C_2$ is a 2-sphere lying degree 2 in $\C P^2$. 
We claim that the space of maximally entangled triplets, which we call $\mathrm{Max}$, is diffeomorphic to $\R P^2$, and that
the complement $\C P^2 \setminus C_2$ deformation retracts to $\mathrm{Max}$.
Likewise, the complement $\C P^2 \setminus \mathrm{Max}$ deformation retracts to $C_2$.
( In fact, the region in between $C_2$ and $\mathrm{Max}$ consist of a product family of $SU(2)$-orbits $SU(2)/Z_4,$ each diffeomorphic to the lens space $L_{4,1}$.)

To show that $\mathrm{Max}$ is $\R P^2$, given an arbitrary symmetric state, $a|00\rangle+b(|01\rangle+|10\rangle)+c|11\rangle$, define a matrix
$$M\equiv \begin{pmatrix} a & b \\ b & c \end{pmatrix}.$$
It will be convenient to normalize the state to have $\ell_2$ norm $\sqrt{2}$ so that $|a|^2+2|b|^2+|c|^2=2$.  Then,
the Hilbert-Schmidt norm of $M$ is $\sqrt{2}$.
With this normalization, the maximally entangled states are those for which $M$ is a unitary matrix, while the product states are those for which $M$ is singular.
Since the phase of the state is arbitrary, we may choose the phase such that $M$ has unit determinant, and thus is an element of $SU(2)$.
In this case, $b$ must be pure imaginary and $a=c^*$.
Thus, $M$ is defined by three real numbers (real and imaginary parts of $a$ and imaginary part of $b$), whose sum of squares is equal to $1$, so $M$ is defined by a point in $S^2$.  However, if we simultaneously change the sign of $a,b,c$, then this gives the same state up to phase (and this is the only remaining phase arbitrariness once we have imposed that $M$ has unit determinant), so we identify opposite points of $S^2$, giving us $\R P^2$.

Now we show that $\C P^2 \setminus C_2$ deformation retracts to $\mathrm{Max}$.  Let $f_s(x):\R^+\rightarrow \R^+$ be a family of functions defined by
$$f_s(x)\equiv \frac{1+s\sqrt{x}}{1+sx},$$
for $s\in[0,\infty)$ so that $f_0(x)=1$ while $f_s(x) \rightarrow 1/\sqrt{x}$ as $s\rightarrow\infty$.
Then the family of matrices
$$Z_s(M) f_s(M M^*) M$$ gives the desired deformation retraction, where $Z_s(M)$ is a positive real scalar chosen to keep the Hilbert-Schmidt norm constant.
If $M$ has singular value decomposition $M=U \Lambda V$, for $U,V$ unitary and $\Lambda$ diagonal and non-negative real, then
$f_s(M M^*)M=U (f(\Lambda^2) \Lambda) V$ and so for non-singular $M$, $M$ converges to a unitary matrix as $s\rightarrow \infty$.
Further, $f_s(M M^*)M$ is symmetric: $(f_s(M M^*)M)^T=M f_s(M^* M)=f_s(M M^*)M$, where the superscript $T$ denotes transposition, where the first equality uses that $M=M^T$, and the second equality can be checked using the singular value decomposition.

Finally we show that $\C P^2 \setminus \mathrm{Max}$ deformation retracts to $C_2$.  Let $g_s(x):R\rightarrow R^+$ be a family of functions
defined by
$$g_s(x)\equiv \exp(s x).$$
Then the family of matrices
$$Z'_s(M) g_s(M M^*) M$$ for $s\in [0,\infty)$ gives the desired deformation retraction, where $Z'_s(M)$ is a positive real scalar chosen to keep the Hilbert-Schmidt norm constant.  Indeed, so long as the singular values of $M$ are distinct from each other, then the ratio between the singular values of $g_s(M M^*) M$ diverges as $s\rightarrow \infty$, and so $Z'_s g_s(M M^*) M$ converges to a singular matrix corresponding to a state in $C_2$.
Further, $g_s(M M^*) M$ is symmetric, with the same proof as for $f_s(M M^*)M$.

In fact, $\C P^2$ is the union of $D^2$-bundles over $C_2$ and $\mathrm{Max}$, along their common boundaries.  Unless the maximum entanglement is already achieved over the south pole $s$, the map $d:S^3-\rightarrow \C P^2$ (using the coordinates determined by our clutching function $c$) that describes the behavior of $P(\Psi)$ as one approaches the south pole, $s$, i.e. at the boundary of $D^4$, must take its values in $\mathcal{N}(C_2) :=\C P^2\setminus \mathrm{Max}$. We have just seen that $\mathcal{N}(C_2)$ deformation retracts to $C_2 \cong S^2$, so the map $d$ may be denoted by its Hopf degree in the diagram below. $d$ has Hopf degree one, meaning it is a generator of $\pi_3(S^2)$. The proof is completed by considering this diagram:

\begin{equation}
    \begin{tikzpicture}[baseline=(current bounding box.center)]
        \node at (0,0) {$D^4 \xrightarrow{\hspace{1.3em} P(\Psi) \hspace{1.3em}} \C P^2 \cong P(\operatorname{sym}(\C^2))$};
    \node[rotate=90] at (0.5,-0.6) {$\lhook\joinrel\longrightarrow$};
    \node at (0.6,-1.3) {$\C P^2 \setminus \mathrm{Max}$};
    \node[rotate=90] at (0.5,-2) {$\lhook\joinrel\longrightarrow$};
    \node at (-0.8,-2.7) {$S^3 \xrightarrow{\hspace{1em} \text{Hopf deg.\ 1} \hspace{1em}} \mathcal{N}(C_2)$};
    \node[rotate=90] at (-2.8,-1.35) {$\lhook\joinrel\xrightarrow{\hspace{3em}}$};
    \draw[->,dashed] (-2.4,-0.4) -- (-0.5,-1.2);
    \draw (-1.7,-0.9) -- (-1.4,-0.6);
    \end{tikzpicture}
\end{equation}

The dashed arrow would contradict Hopf-degree one, so it does not exist. From this, we conclude the extension $P(\Psi)$ over $D^4$ must, at some point, take at least one value in $\mathrm{Max}$. Thus, every non-zero section of $\operatorname{sym}^2(A)$ contains at least one vector $P(\Psi)(x)$ of maximal entanglement in $\C^2 \otimes \C^2 \supset \operatorname{sym}^2(\C^2)$. \qed

We have seen how a bundle built from tensoring two sectionless bundles might have sections but these must obey entanglement constraints. Now let us turn our attention to triple tensor products $P = A \otimes B \otimes C$. In this context there is a beautiful hierarchy of entanglement under the rather coarse SLOCC\footnote{SLOCC stands for Stochastic Local Operators and Classical Communication.}-equivalence relation. We will recapitulate the basic low dimensional SLOCC classification \cite{dvc00} from an algebraic geometric point of view and use what we learn to explain and explore further examples.

Before coming to examples of bundle triple tensor products, we review the SLOCC equivalence relation, following \cite{dvc00} but with an emphasis on the algebraic geometry. Let $\psi$ and $\psi^\pr$ be nontrivial vectors in a finite tensor product of finite dimensional Hilbert spaces:
\begin{equation}
    \psi, \psi^\pr \in H_1 \otimes \cdots \otimes H_n
\end{equation}
The definition of SLOCC-equivalent that we use \cites{lp99,v99} is that $\psi \equiv \psi^\pr$ iff there exists invertible matrices $M_1, \dots, M_n$ such that:
\begin{equation}\label{eq:slocc}
    \psi^\pr = M_1 \otimes \cdots \otimes M_n(\psi)
\end{equation}
Similarly, we say $\psi \geq \psi^\pr$ if line \ref{eq:slocc} can be written dropping the invertability assumption.

In the case of two tensor factors, $n = 2$, it is easy to check that $\psi \equiv \psi^\pr$ iff they have equal Schmidt ranks. Now we turn to three factors, $n = 3$, and also assume each $H_i \cong \C^2$ is a qubit. In a slight abuse of notation we label the three qubits $A$, $B$, and $C$ and retain the same notation for the $\C^2$-bundles in which these qubits have become fibers parameterized over a base. Let $\rho_A, \rho_B, \rho_C, \rho_{AB}, \rho_{BC}$, and $\rho_{CA}$ be the reduced density matrices of $\psi$ where the projector $\abs{\psi \rangle \langle \psi}$ has had the complementary indexed factors traced out, e.g.\ $\rho_A$ is the image of $\abs{\psi \rangle \langle \psi} \in A \otimes B \otimes C \otimes A^\ast \otimes B^\ast \otimes C^\ast$ under
\begin{equation}
    A \otimes B \otimes C \otimes A^\ast \otimes B^\ast \otimes C^\ast \xrightarrow{\operatorname{tr}_{BC}} A \otimes A^\ast
\end{equation}

It turns out that the ranks $r(\rho_A) = r(\rho_{BC})$, $r(\rho_B) = r(\rho_{AC})$, $r(\rho_C) = r(\rho_{AB})$, and the number of pure (tensor-rank 1) vectors in the images (ranges) $R(\rho_{BC}), R(\rho_{AC})$, and $R(\rho_{AB})$ determine the SLOCC equivalence classes. Let us begin with the answer. In this setting (three qubits) there are 6 equivalence classes, with a canonical representative listed in the partial order (above means $\geq)$ below:

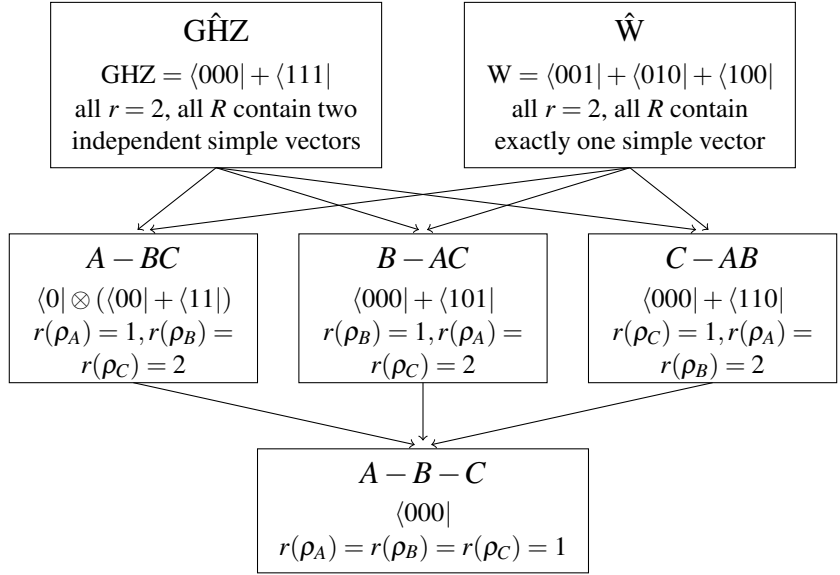
\begin{figure}[ht]
    \centering
    \begin{tikzpicture}[scale=1.1]
        \draw (-4.5,3) rectangle (-0.5,1);
        \node at (-2.5,2.7) {$\hat{\mathrm{GHZ}}$};
        \node at (-2.5,2.1) {\footnotesize{GHZ $= \langle 000\vert + \langle 111 \vert$}};
        \node at (-2.5,1.7) {\footnotesize{all $r =2$, all $R$ contain two}};
        \node at (-2.5,1.3) {\footnotesize{independent simple vectors}};
    
        \draw (0.5,3) rectangle (4.5,1);
        \node at (2.5,2.7) {$\hat{\mathrm{W}}$};
        \node at (2.5,2.1) {\footnotesize{W $= \langle 001\vert + \langle 010 \vert + \langle 100\vert$}};
        \node at (2.5,1.7) {\footnotesize{all $r =2$, all $R$ contain}};
        \node at (2.5,1.3) {\footnotesize{exactly one simple vector}};
    
        \draw (-5,0.2) rectangle (-2,-1.6);
        \node at (-3.5,-0.1) {$A - BC$};
        \node at (-3.5,-0.6) {\footnotesize{$\langle 0 \vert \otimes (\langle 00 \vert + \langle 11 \vert)$}};
        \node at (-3.5,-1) {\footnotesize{$r(\rho_A) = 1, r(\rho_B) =$}};
        \node at (-3.5,-1.4) {\footnotesize{$r(\rho_C) = 2$}};
    
        \draw (-1.5,0.2) rectangle (1.5,-1.6);
        \node at (0,-0.1) {$B - AC$};
        \node at (0,-0.6) {\footnotesize{$\langle 000 \vert + \langle 101 \vert$}};
        \node at (0,-1) {\footnotesize{$r(\rho_B) = 1, r(\rho_A) =$}};
        \node at (0,-1.4) {\footnotesize{$r(\rho_C) = 2$}};
    
        \draw (2,0.2) rectangle (5,-1.6);
        \node at (3.5,-0.1) {$C - AB$};
        \node at (3.5,-0.6) {\footnotesize{$\langle 000 \vert + \langle 110 \vert$}};
        \node at (3.5,-1) {\footnotesize{$r(\rho_C) = 1, r(\rho_A) =$}};
        \node at (3.5,-1.4) {\footnotesize{$r(\rho_B) = 2$}};
    
        \draw (-2,-2.4) rectangle (2,-3.9);
        \node at (0,-2.7) {$A - B - C$};
        \node at (0,-3.2) {\footnotesize{$\langle 000 \vert$}};
        \node at (0,-3.6) {\footnotesize{$r(\rho_A) = r(\rho_B) = r(\rho_C) = 1$}};

        \draw[<->] (-0.05,0.25) -- (-2.5,1) -- (-3.45,0.25);
        \draw[->] (-2.5,1) -- (3.3,0.25);
        \draw[<->] (0.05,0.25) -- (2.5,1) -- (3.45,0.25);
        \draw[->] (2.5,1) -- (-3.3,0.25);
        \draw[->] (3.5,-1.6) -- (0.1,-2.35);
        \draw[->] (0,-1.6) -- (0,-2.3);
        \draw[->] (-3.5,-1.6) -- (-0.1,-2.35);
    \end{tikzpicture}
    \caption{The 6 SLOCC classes in $\C^2 \otimes \C^2 \otimes \C^2$ with representatives.}
\end{figure}

In the lowest class, $\psi$ completely factorizes. In the second tranche of three classes, $\psi$ factors once. The interesting point is that there are \emph{two} distinct classes, which we write $\hat{\mathrm{GHZ}}$ and $\hat{\mathrm{W}}$ after their well-known representatives GHZ and W, respectively, in which all three qubits are entangled.

What separates $\hat{\mathrm{GHZ}}$ from $\hat{\mathrm{W}}$ is that in GHZ the three ranges $R(\rho_{BC}), R(\rho_{CA})$, and $R(\rho_{AB})$ contain two distinct tensor-rank one (simple) vectors (of $B \otimes C$, ...) whereas for $\psi \in \hat{\mathrm{W}}$ these ranges contain only a single ``double root'' simple vector. In order to understand this distinction, let us think geometrically. The reader may have already noticed that we do not bother to normalize the state vectors, but for the moment do. Then $\psi \in S^{15}$, the unit sphere in $(\C^2)^{\otimes 3} \cong \C^8$. It turns out that the union of the strata of increasing complexity are real algebraic varieties in $S^{15}$ with the indicated dimensions:
\begin{align*}
    & \dim_R(V_1^\pr) = \dim_R(A - B - C) = 7 \\
    & \dim_R(V_2^\pr) = \dim_R(A - B - C \cup A - BC \cup B - AC \cup C - AB) = 9 \\
    & \dim_R(V_3^\pr) = \dim_R(A - B - C \cup A - BC \cup B - AC \cup C - AB \cup \hat{\mathrm{W}}) = 13 \\
    & \dim_R(S^{15}) = \dim_R(A - B - C \cup A - BC \cup B - AC \cup C - AB \cup \hat{\mathrm{W}} \cup \hat{\mathrm{GHZ}}) = 15
\end{align*}

The first three real varieties are called $V_1^\pr, V_2^\pr,$ and $V_3^\pr \subset S^{15}$, respectively. We previously met $V_2^\pr$ in example \ref{ex:s4}. It is also useful to divide out by phase everywhere and obtain complex projective varieties $V_1, V_2,$ and $V_3 \subset \C P^7$ with $\dim_{\C}(V_1) = 3$, $\dim_{\C}(V_2) = 4$, and $\dim_{\C}(V_3) = 6$.

The condition, which we will shortly return to, regarding the count of simple vectors in the range of density matrices such as $\rho_{BC}$ can also be understood geometrically. $\rho_{BC}: \C_B^2 \otimes \C_C^2 \ra \C_B^2 \otimes \C_C^2 \cong \C^4$ is a linear map with range $R_{BC} \coloneqq R(\rho_{BC})$. A complex 2-plane, call it $R(\rho_{BC}) \coloneqq Q^\pr \subset \C^4$, defines a projective line $Q \subset \C P^3$. The condition that a vector in $\C_B^2 \otimes \C_C^2$ is simple, i.e. is of the form $\vert \psi_B \rangle \vert \psi_C \rangle$, is quadratic. The locus of all simple vectors is a nonsingular degree 2 hypersurface $S \subset \C P^3$, $S$ is birationally equivalent (and diffeomorphic to) the projective locus of $x^2 + y^2 + z^2 + w^2 = 0$; $S \cong \C P^1 \times \C P^1 \cong S^2 \times S^2$. Thus, projectively, the locus of simple vectors in $R_{BC}$ is $S \cap Q$. $Q$ is a degree 1 curve of $\C P^3$ and $S$ is a degree 2 hypersurface so the cohomology ring structure\footnote{$H^\ast(\C P^3;\Z)$ is a truncated polynomial algebra over $\Z$ with a single genator $x \in H^2(\C P^3;\Z)$ with relation $x^4 = 0$. $Q$, being a projective line, represents $x^2$ under Poincar\'{e} duality and $S$ represents $2x$ being degree 2. $x^2 \cup 2x = 2x^3$, which is Poincar\'{e} dual to two points.} of $\C P^3$ tells us that algebraically $[S] \cap [Q] = 2$. If the intersection is transverse, the generic situation as $\psi$ is varied, then the algebraic 2 translates into two distinct simple vectors in $Q$. However, there is a complex codimension 1 strata where there is a single degenerate point of geometric intersection, an isolated tangency between $S$ and $Q$. The situation is fully analogous to the complex equation $z^2 + a = 0$; the roots are distinct unless $a = 0$ in which case the root 0 also solves the derivative of the initial equation: $2z = 0$. It turns out that this behavior (generic, double) vs.\ (tangential, a single simple vector) correlates perfectly across $R_{BC}$, $R_{CA}$, and $R_{AB}$ and distinguishes $\hat{\mathrm{GHZ}}$ from $\hat{\mathrm{W}}$, the latter corresponding to single simple vectors. $\hat{\mathrm{W}}$ is characterized by having a unique simple vector in some (all) $R_{\ast\ast}$.

This characterization of $\hat{\mathrm{GHZ}}$ and $\hat{\mathrm{W}}$ in terms of simple vectors explains why $V_3 \subset \C P^7$ is an algebraic hypersurface. One can go further and build up a fairly precise topological picture of the stratification $V_1 \subset V_2 \subset V_3 \subset S^{15}$, but this will be deferred until it is necessary for an application. We turn next to a summary of the argument \cite{dvc00} that the behavior of the simple vectors in $R_{\ast\ast}$ in fact distinguishes the SLOCC classes $\hat{\mathrm{GHZ}}$ and $\hat{\mathrm{W}}$. In the lowest strata $V_1$ all the density matrices have rank 1. In the next strata $V_2 \setminus V_1$ two of the three density matrices have rank 2, e.g.\ for $A - BC$, $r_a = 1$, but $\rho_{CA}$ and $\rho_{AB}$ have ranges of the form $\C \otimes \C^2$.

The following lemma facilitates computation:

\begin{lemma}[\cite{dvc00}]
    If $\vert \mu \rangle \in H_E \otimes H_F$ is written $\sum_{i=1}^l \vert e_i \rangle \vert f_l \rangle$ then $R(\rho_E)$ lies in span$(\{\vert e_i\rangle\}_{i=1}^l)$.
\end{lemma}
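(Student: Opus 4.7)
The plan is a direct computation of the partial trace, made transparent by choosing to expand everything in the decomposition $|\mu\rangle = \sum_{i=1}^l |e_i\rangle|f_i\rangle$ that is handed to us. Note that the lemma makes no orthonormality assumption on the $|e_i\rangle$ or the $|f_i\rangle$, so we must not smuggle one in; luckily we will not need to.

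First I would write out the projector
\begin{equation*}
    |\mu\rangle\langle\mu| \;=\; \sum_{i,j=1}^l \bigl(|e_i\rangle\langle e_j|\bigr)\otimes\bigl(|f_i\rangle\langle f_j|\bigr) \;\in\; \operatorname{End}(H_E)\otimes\operatorname{End}(H_F).
\end{equation*}
Then I apply the partial trace $\operatorname{tr}_F$, which is a linear map that sends $X\otimes Y$ to $X\cdot\operatorname{tr}(Y)$. Since $\operatorname{tr}(|f_i\rangle\langle f_j|)=\langle f_j|f_i\rangle$, this yields
\begin{equation*}
    \rho_E \;=\; \sum_{i,j=1}^l \langle f_j | f_i\rangle\, |e_i\rangle\langle e_j|.
\end{equation*}

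The conclusion is then immediate: for any $|v\rangle\in H_E$,
\begin{equation*}
    \rho_E |v\rangle \;=\; \sum_{i=1}^l \Bigl(\sum_{j=1}^l \langle f_j|f_i\rangle\,\langle e_j|v\rangle\Bigr) |e_i\rangle,
\end{equation*}
which manifestly lies in $\operatorname{span}(\{|e_i\rangle\}_{i=1}^l)$. Since $R(\rho_E)$ is the union of such images as $|v\rangle$ varies over $H_E$, this containment is exactly the desired assertion.

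There is no real obstacle here; the only pitfall worth flagging is the temptation to diagonalize $\rho_E$ via a Schmidt decomposition, which would require choosing $|e_i\rangle,|f_i\rangle$ orthonormal and would obscure the fact that the lemma is an entirely general statement about any expansion of $|\mu\rangle$ of length $l$. The proof above avoids that trap by working term by term with the rank-one operators $|e_i\rangle\langle e_j|$, each of which has range inside $\operatorname{span}\{|e_i\rangle\}$, so that any linear combination does as well.
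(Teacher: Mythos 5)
Your proof is correct and follows essentially the same route as the paper: compute $\rho_E$ as $\sum_{i,j}\langle f_\cdot\mid f_\cdot\rangle\,\lvert e_i\rangle\langle e_j\rvert$ and observe that applying it to any vector lands in $\operatorname{span}(\{\lvert e_i\rangle\})$. Your write-up is in fact slightly cleaner, since the paper's final display has an index slip, but there is no substantive difference in approach.
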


\begin{proof}
    $\rho_E = \sum_{i,j=1}^l \langle f_i \mid f_j \rangle \lvert e_i \rangle \langle e_j \rvert$. For $\vert \nu \rangle$ in $R(\rho_E)$, let $\vert \nu \rangle = \rho_E \vert \mu \rangle$, some $\vert \mu \rangle$. Then $\vert \nu \rangle = \sum_{i,j=1}^l \langle f_i \mid f_j \rangle \langle e_i \mid \nu \rangle \vert e_j \rangle$.
\end{proof}

Suppose $\vert \psi \rangle$ is \emph{not} in $V_2^\pr$ but that $\rho_{BC}$ contains two independent simple vectors $\vert b_1 \rangle \vert c_1 \rangle$ and $\vert b_2 \rangle \vert c_2 \rangle$, then following \cite{dvc00} we may write
\begin{equation}\label{eq:rep}
    \vert \psi \rangle = \vert a_1 \rangle \vert b_1 \rangle \vert c_1 \rangle + \vert a_2 \rangle \vert b_2 \rangle \vert c_2 \rangle
\end{equation}
here $\vert a_i \rangle$, $i = 1,2$, is defined by $\vert a _i \rangle = \langle \xi_i \mid \psi \rangle$ for $\{\langle \xi_i \vert \}$ biorthogonal to $\{ \vert b_1 \rangle \vert c_1 \rangle, \vert b_2 \rangle \vert c_2 \rangle \}$, i.e. $\langle \xi_i \vert b_i \rangle \vert c_i \rangle = \delta_{ij}$.

This shows that in the generic case of two independent simple vectors in $R(\rho_{BC})$, $\vert \psi \rangle$ has tensor-rank 2. It is immediately that $\vert \psi \rangle$ is SLOCC equivalent to the canonical GHZ state.

It is shown in \cite{dvc00} that all vectors $\vert \psi \rangle$ with a degenerate simple vector in $\rho_{BC}$ are SLOCC equivalent to W. Such states, those of $\hat{\mathrm{W}}$, can be characterized as having tensor-rank 3, which is equivalent to $R(\rho_{BC})$ having dimension 2, with a unique simple (tensor-rank 1) vector.

With this background on the SLOCC classification, we now consider our next example.

\begin{example}\label{ex:qpk}
    Let $QP^k$, $k = 1$, 2, or 3, the quaternionic projective space, serve as our (4, 8, or 12)-dimensional base. Let $A$, $B$, and $C$ be copies of the canonical quaterion line bundle $Q \ra E \ra QP^2$, regarded as a $\C^2$-bundle; $Q \cong \C \oplus j \C$. These 2-plane bundles have $c_2 = 1 \in H^4(QP^k; \Z)$, and for $k=1$ we already met these bundles in Examples \ref{ex:s4} and \ref{ex:s4}\textprime. The bundles $A$, $B$, and $C$ have no nonvanishing sections. For dimensional reasons, $16 > \mathrm{base} \dim$, $P = A \otimes B \otimes C$ \emph{does} have nonvanishing sections, but no such section $\Gamma(x)$, $x \in QP^k$, can be of GHZ-type for all $x$.
\end{example}

Since the ket $\vert a_i \rangle$ is constructed by the formula (\ref{eq:rep}), and since the biorthogonals $\{\xi_i\}$, $i = 1,2$ are unique, any $\vert \psi \rangle$ in the SLOCC class $\hat{\mathrm{GHZ}}$ has a unique representation as in line \ref{eq:rep}, although the order of the two summands is not well defined. Indeed, by a fundamental theorem of Thom \cite{thom69}, the closure of $\hat{\mathrm{W}} \coloneqq V_3^\pr$ carries a fundamental top-dimensional\footnote{In general, $\Z_2$ coefficients are required to define the top class of a real algebraic variety, but for dimensional reasons the coefficients may be lifted to $\Z$ in this case.} cycle $\omega \in H_{13}(V^\pr;\Z)$, and if $\gamma \subset S^{15}$ is a linking circle to this cycle, if one chooses a ``first'' term in (\ref{eq:rep}) at some point of $\gamma$ and continuously propogates this choice along $\gamma$, the choice will be \emph{reversed} upon the first return. The degeneration of the two independent simple vector terms is modeled locally precisely as a quadratic branch point. In the special case that $\gamma$ is a small linking circle (to $V_3^\pr$) normal to $V_1^\pr$, the lowest strata, the two terms approach each other in the limit; in the general case their sum approaches a tensor of rank 3, i.e.\ an element of $\hat{\mathrm{W}}$. $\hat{\mathrm{W}}$ is said to have \emph{border rank} 2.

Now suppose, for a contradiction, that $\Gamma(x) \in \hat{\mathrm{GHZ}}$, for all $x \in QP^k$. Because $H_1(QP^k; \Z_2) \cong 0$, no 1-cycle $\Gamma(\alpha(\te))$, $\alpha: S^1 \ra QP^k$, can link (mod 2) $V_3^\pr \subset S^{15}$. So there would be no loop $\alpha(\te)$ in $QP^k$ along which the terms of the unique decomposition (\ref{eq:rep}) of $\vert \psi \rangle$ are exchanged. But this means that taking just the ``first'' term of (\ref{eq:rep}) defines a new section $\Gamma(x)$ of $P$ with tensor-rank 1, $\Gamma(x) = \vert a_1(x) \rangle \vert b_1(x) \rangle \vert c_1(x) \rangle$. But this is a contradiction as any of the three tensor factors, say $\vert a_1(x) \rangle$, would be a section of its bundle, $A$, at least modulo phase. However, since $H^2(QP^k;\Z) \cong 0$, there is no obstruction---as we saw in Example \ref{ex:s4}, to lifting a projective section of $\hat{A} \coloneqq A \slash \text{phase}$ to an actual section of $A$.

\begin{example}\label{ex:qp3}
    Let the base now be $QP^3$, and $A,B,C$ again be the canonical bundles we met in Example \ref{ex:qpk}. Now we make a claim \emph{dual} to the claim of Example \ref{ex:qpk}. We claim that $P = A \otimes B \otimes C$ cannot have an untwisted (nonvanishing) section $\Gamma(x)$, where $\Gamma(x) \in V_2^\pr$ for all $x$, $x \in QP^3$. That is, $\Gamma(x)$ cannot lie entirey in the lowest two strata of states.
\end{example}

\begin{proof}
    It is immediate, even without the untwisted hypothesis, that $\Gamma(x)$ cannot lie entirely in the very lowest strata $V_1^\pr$; this is how we finished off the analysis of Example \ref{ex:qpk}. If $\Gamma(x) = \vert a(x) \rangle \vert b(x) \rangle \vert c(x) \rangle$, then each of the three tensor factors implies, at first, a projective, and then a genuine section of the corresponding factor bundle, a contradiction. Because $H^2(QP^3, \Z) = 0$, the Chern obstructions must vanish, and $\Gamma(x)$ be untwisted.

    The idea for the general case is to write $QP^3$ as the union of three closed sets $QP^3 = X_{BC} \cup X_{CA} \cup X_{AB}$, where $X_{BC} = \Gamma^{-1}(A - BC)$, etc., overlapping along $\Gamma^{-1}(A - B - C)$. We will apply a bit of reasoning familiar in topology from the Listernick-Schirlemann (LS) theorem. But first, a technical point. The LS argument is cohomological and requires that the excision axiom apply to $X_1 \coloneqq X_{BC}$, $X_2 \coloneqq X_{CA}$, and $X_3 \coloneqq X_{AB}$. Excision may fail for certain pathological close sets like the ``$\sin \frac{1}{x}$-circle.'' However, the section $\Gamma(x)$ is into (a bundle of) real algebraic sets $V_2^\pr$, which obey favorable local conditions, \emph{Whitney stratification}, and therefore can be triangulated. Thus, the general section from a smooth base into fiber $V_2^\pr$ can be perturbed (without changing the range) so that the preimages of the strata are all subcomplexes of a Whitehead triangulation of the base, in this case $QP^3$. Thus w.l.o.g.\ we may apply excision to $X_1$, $X_2$, and $X_3$.

    Now suppose for a contradiction that the three inclusions induce the zero-map on $H^4(-;\Z)$:
    \begin{equation}
        H^4(QP^3;\Z) \xrightarrow{0} H^4(X_i;\Z) \text{ is zero},\ i = 1,2,3
    \end{equation}

    Then the generator $g \in H^4(\C P^3;\Z)$ pulls back to $g_i \in H^4(QP^3, X_i; \Z)$, $i = 1,2,3$, using the exact sequence of the pair $(QP^3,X_i)$. Now consider the commutative diagram of Figure \ref{fig:comm-diag}.

    \begin{figure}[H]
        \centering
        \begin{tikzpicture}[scale=1.2]
            \node at (0,1) {$H^4(QP^3, X_1; \Z) \otimes H^4(QP^3, X_2; \Z) \otimes H^4(QP^3,X_3;\Z) \xrightarrow{\cup} H^{12}(QP^3, \cup_{i=1}^3 X_i ;\Z) \cong 0$};
            \node[rotate=90] at (-4.8,0.6) {$\in$};
            \node at (-4.8,0.25) {$g_1$};
            \node[rotate=-90] at (-4.8,-0.1) {$\mapsto$};
            \node at (-4.8,-0.45) {$g$};
            \node[rotate=-90] at (-4.8,-0.75) {$\in$};
            \node at (-4.8,-1.1) {$H^4(QP^3;\Z)$};

            \node at (-3.55,-1.1) {$\otimes$};

            \node[rotate=90] at (-2.3,0.6) {$\in$};
            \node at (-2.3,0.25) {$g_2$};
            \node[rotate=-90] at (-2.3,-0.1) {$\mapsto$};
            \node at (-2.3,-0.45) {$g$};
            \node[rotate=-90] at (-2.3,-0.75) {$\in$};
            \node at (-2.3,-1.1) {$H^4(QP^3;\Z)$};

            \node at (-1,-1.1) {$\otimes$};

            \node[rotate=90] at (0.3,0.6) {$\in$};
            \node at (0.3,0.25) {$g_3$};
            \node[rotate=-90] at (0.3,-0.1) {$\mapsto$};
            \node at (0.3,-0.45) {$g$};
            \node[rotate=-90] at (0.3,-0.75) {$\in$};
            \node at (0.3,-1.1) {$H^4(QP^3;\Z)$};

            \node at (1.7,-1) {$\overset{\cup}{\longrightarrow}$};

            \node[rotate=90] at (3.1,0.6) {$\in$};
            \node at (3.1,0.25) {$0$};
            \node[rotate=-90] at (3.1,-0.1) {$\neq$};
            \node at (3.1,-0.45) {1};
            \node[rotate=-90] at (3.1,-0.75) {$\in$};
            \node at (3.4,-1.1) {$H^{12}(QP^3;\Z) \cong \Z$};

            \draw (0.6,0.15) -- (0.6,0.35);
            \draw[->] (0.6,0.25) -- (2.8,0.25);
            \draw (0.6,-0.55) -- (0.6,-0.35);
            \draw[->] (0.6,-0.45) -- (2.8,-0.45);
        \end{tikzpicture}
        \caption{}\label{fig:comm-diag}
    \end{figure}

    The contradiction $0 = 1$ in Figure \ref{fig:comm-diag} shows that for some $i = 1,2,3$, w.l.o.g.\ say $i = 1$, we must have $H^4(QP^3;\Z) \ra H^4(X_1;\Z)$ nonzero. Let $\lbar{g}$ denote the image of $g$ in $H^4(X_1;\Z)$. Tensoring with $Q$, the same argument shows $H^4(QP^3;Q) \ra H^4(X_1;Q)$ is an injection, and by hom-duality ($H^\ast(-;\text{Field})$ is naturally isomorphic to Hom$(H_\ast(-;\mathrm{Field}),\mathrm{Field})$) we have that:
    \begin{equation}
        H_4(X_1; Q) \ra H_4(QP^3; Q) \text{ is a surjection.}\label{eq:surj}
    \end{equation}

    Let\begin{tikzpicture}[baseline=-5.8ex]
        \node at (0,0) {$\C^8$};
        \draw [->] (0.4,0) -- (0.8,0);
        \node at (1.1,0) {$P_1$};
        \draw [->] (1.1,-0.3) -- (1.1,-0.7);
        \node at (1.1,-1) {$X_1$};
        \draw[->] (1.4,-0.9) to[out=60,in=-60] (1.4,-0.1);
        \node at (2.55,-0.5) {\footnotesize{$\vert a \rangle \vert \phi_{BC} \rangle = \Gamma_1$}};
    \end{tikzpicture}
    be the restriction of $P$ to $X_1$ and $\Gamma_1$ the restriction of $\Gamma$ to $X_1$. $P_1$ has a $(A - BC)$-type section $\vert a \rangle \otimes \vert \phi_{BC} \rangle$. Projectively splitting off the first factor, we obtain
    \begin{tikzpicture}[baseline=-5.8ex]
        \node at (0,0) {$\C P^1$};
        \draw [->] (0.4,0) -- (0.8,0);
        \node at (1.1,0) {$\hat{A}_1$};
        \draw [->] (1.05,-0.3) -- (1.05,-0.7);
        \node at (1.1,-1) {$X_1$};
        \draw[->] (1.4,-0.9) to[out=60,in=-60] (1.4,-0.1);
        \node at (1.8,-0.5) {\footnotesize{$\vert \hat{a} \rangle$}};
    \end{tikzpicture}
    , a projective line bundle with section $\vert \hat{a} \rangle$, obtained by projectivizing $A \ra \hat{A}$ and restricting to $X_1$.

    What can we do with this projective bundle? We do \emph{not} know that $H^2(X_1, Z) \cong 0$, so, in principle, there could be a first Chern class $c_1$ obstruction to lifting back to the vector bundle. However, our assumption that $\Gamma$, hence $\Gamma_1$, is untwisted w.r.t.\ $A \otimes (B \otimes C)$ allows us to lift $\vert \hat{a} \rangle$ to a section $\vert a \rangle$ of
    \begin{tikzpicture}[baseline=-5.8ex]
        \node at (0,0) {$\C^2$};
        \draw [->] (0.4,0) -- (0.8,0);
        \node at (1.1,0) {$A$};
        \draw [->] (1.1,-0.3) -- (1.1,-0.7);
        \node at (1.1,-1) {$X$};
    \end{tikzpicture}
    . But line \ref{eq:surj} implies $QP^1$ is rationally homologous to some 4-cycle $Y \subset X$, and $c_2(A)[QP^1] = 1$ by construction, implying $c_2(A)[Y] = c_2(A_1)[Y] \neq 0$. But we have just split a section $\vert a \rangle$ off $A$, so, stably $A_1$ is only a line bundle over $X_1$, thus by the dimension axiom $c_2(A_1) = 0$, a contradiction. This contradiction shows that $P$ has no untwisted section of type $V_2^\pr$; tripartite entanglement must arise at some point of the base.
\end{proof}

One may well ask what happens if we abandon the hypothesis that $\Gamma$ is untwisted; could there then be a section $\Gamma(x)$ with only bipartite entanglement? We do not know but the following example is cautionary: The first possibility to come to mind for a 4-cycle $Y$ as above would be $Y = QP^1 \cong S^4$. Since $H^2(S^4;\Z) \cong 0$ the section of $P_1$, at least over $Y \subset X$, would automatically be untwisted so that assumption would be redundant. However, suppose $Y \cong \C P^2$. This possible $\C P^2$ certainly embeds in the 12D normal bundle of $QP^1 \subset QP^3$, representing $[QP^1] \in H_4(QP^3; \Z)$. Now, most curiously the unique $\C^2$-bundle $T$ over $\C P^2$ with $c_2(T)[\C P^2] = 1$, is the sum of two line-bundles $T = T_+ \oplus T_-$, where the total Chern classes are:
\begin{align*}
    & c(T_+) = 1 + \mathrm{gen} \\
    & c(T_-) = 1 - \mathrm{gen} \\
    & c(T_+ \oplus T_-) = (1 + \mathrm{gen})(1 - \mathrm{gen}) = 1 - \mathrm{gen}^2
\end{align*}
gen generating $H^2(\C P^2;\Z)$ and via the choice over orientation, $-\mathrm{gen}^2$ being the orientation class in $H^4(\C P^2; \Z)$.

Thus if $Y = \C P^2$, $T$ would be $A \vert_Y$ the original bundle $A$ restricted to $Y$. Since $T$ is a sum of line bundles, $A \vert_Y$ would certainly have (two independent) projective sections. If $\vert \hat{a} \rangle$ were either of these projective sections, $c_A$, the first Chern obstruction would be a generator of $H^2(\C P^2; \Z)$, so $\vert \hat{a} \rangle$ would not be untwisted.

Perhaps starting with $X_1 = \operatorname{neighborhood}(\C P^2)$ and building $X_2$ and $X_3$ appropriately, a twisted but bipartite-entangled section $\Gamma$ of $P$ might be constructed. This is an attractive open problem.

We conclude this section with a final example, a variant of Example \ref{ex:qp3}, exhibbting the same phenomena but now with a more familiar base, the 12-torus $T^{12}$ rather than $QP^3$. Tori arise in condensed matter as Brillion Zones (``momentum tori'') and do not have the excotic flavor of quaterionic projective spaces.

\begin{customex}{4\textprime}
    Let $f: T^4 \coloneqq S^1 \times S^1 \times S^1 \times S^1 \ra S^4$ be a degree 1 map. Place over $S^4$ the $\C^2$-bundle
    \begin{tikzpicture}[baseline=-6.3ex]
        \node at (0,0) {$\C^2$};
        \draw [->] (0.4,0) -- (0.8,0);
        \node at (1.1,0) {$\Z$};
        \draw [->] (1.1,-0.3) -- (1.1,-0.7);
        \node at (1.1,-1) {$S^4$};
    \end{tikzpicture}
    already seen in Examples \ref{ex:qpk} and \ref{ex:qp3} with $c_2(\Z)[S^4] = 1$. Let $A$, $B$, and $C$ be three $\C^2$ bundles over $T^{12}$ obtained by projecting $T^{12}$ to $T^4$, composing with $f$ and then pulling back $\Z$. By definition, $A$ is obtained using $\pi_{1,\dots,4}: T^{12} \ra T^4$ with
    \[
        \pi_{1,\dots,4}(\te_i) = \begin{cases}
            \te & i = 1,2,3,4 \\
            0 & i > 4
        \end{cases}
    \]
    so $A = (f \circ \pi_{1,\dots,4})^\ast(\Z)$. Similarly, for $B$, replace $\pi_{1,\dots,4}$ with $\pi_{5,\dots,8}$ and for $C$ replace $\pi_{1,\dots,4}$ with $\pi_{9,\dots,12}$. Similar to example \ref{ex:qp3}, $A$, $B$, and $C$ have no nonvanishing section, since they each have a nonvanishing $c_2$. Unlike the previous case, these second Chern classes $c_A$, $c_B$, and $c_C$ are all distinct, but like the preceding case $c_A \cup c_B \cup c_B = \text{orientation class} \in H^{12}(T^{12};\Z)$. Let $c_A^\ast, c_B^\ast$, and $c_C^\ast$ be the hom-duals, lifted to the integers in $H_4(T^{12};\Z)$.
\end{customex}

\subsection*{Claim}
The bundle $P = A \otimes B \otimes C$ admits no untwisted (nonvanishing) section $\Gamma$ with values $\Gamma(x) \in V_2^\pr$, the variety of merely bipartite extanglement.

The proof is parallel to Example \ref{ex:qp3}. The Listernick-Schirlemann argument will find a class of essential 4-cycle $Y$, say $[Y] = c_A^\ast \in H_4(T^{12},\Z)$. As in Example \ref{ex:qp3}, the projective section $\langle \hat{a} \vert$ is lifted to $\langle a \vert$ over $Y$ (using the untwisted hypothesis) and then $\langle a \vert$ contradicts the second Chern class of $A$, $c_A \neq 0$, when restricted to $H^4(Y; \Z)$.

For completeness, the relevant Listernick-Schirlemann diagram is rendered below:

\begin{figure}[ht]
    \centering
    \begin{tikzpicture}[scale=1.1]
        \node at (0,1) {$H^4(T^{12}, X_A; Q) \otimes H^4(T^{12}, X_B; Q) \otimes H^4(T^{12},X_C;Q) \xrightarrow{\cup} H^{12}(T^{12}, T^{12} ; Q)$};
        \node[rotate=20] at (-5.8,0.6) {\footnotesize{$\lbar{g}_A \in$}};
        \node[rotate=20] at (-3,0.6) {\footnotesize{$\lbar{g}_B \in$}};
        \node[rotate=20] at (-0.3,0.6) {\footnotesize{$\lbar{g}_C \in$}};
        \node[rotate=20] at (2.7,0.6) {\footnotesize{$0 \in$}};
    
        \node[rotate=-90] at (-5.9,0.05) {$\mapsto$};
        \draw[->] (-4.6,0.7) -- (-4.6,0.3);
        \node at (-4.6,0) {$H^4(T^{12};Q)$};
        \node at (-3.45,0) {$\otimes$};
        \node[rotate=-90] at (-3.1,0.05) {$\mapsto$};
        \draw[->] (-1.8,0.7) -- (-1.8,0.3);
        \node at (-1.8,0) {$H^4(T^{12};Q)$};
        \node at (-0.7,0) {$\otimes$};
        \node[rotate=-90] at (-0.4,0.05) {$\mapsto$};
        \draw[->] (0.9,0.7) -- (0.9,0.3);
        \node at (0.9,0) {$H^4(T^{12};Q)$};
        \node at (2.15,0.1) {$\xrightarrow{\cup}$};
        \node[rotate=-90] at (2.6,0.1) {$\mapsto$};
        \draw[->] (3.95,0.7) -- (3.95,0.3);
        \node at (3.95,0) {$H^4(T^{12};Q)$};
    
        \node[rotate=20] at (-5.8,-0.3) {\footnotesize{$g_A \in$}};
        \node[rotate=20] at (-3,-0.3) {\footnotesize{$g_B \in$}};
        \node[rotate=20] at (-0.3,-0.3) {\footnotesize{$g_C \in$}};
        \node[rotate=20] at (2.75,-0.3) {\footnotesize{$1 \in$}};

        \node[rotate=-90] at (-5.9,-0.85) {$\mapsto$};
        \node[rotate=20] at (-5.8,-1.25) {\footnotesize{$0 \in$}};
        \draw[->] (-4.6,-0.3) -- (-4.6,-0.7);
        \node at (-4.7,-1) {$H^4(X_A;Q)$};
        \node[rotate=-90] at (-3.1,-0.85) {$\mapsto$};
        \node[rotate=20] at (-3,-1.25) {\footnotesize{$0 \in$}};
        \draw[->] (-1.8,-0.3) -- (-1.8,-0.7);
        \node at (-1.9,-1) {$H^4(X_B;Q)$};
        \node[rotate=-90] at (-0.4,-0.85) {$\mapsto$};
        \node[rotate=20] at (-0.3,-1.25) {\footnotesize{$0 \in$}};
        \draw[->] (0.9,-0.3) -- (0.9,-0.7);
        \node at (0.8,-1) {$H^4(X_C;Q)$};
    \end{tikzpicture}
    \caption{}
\end{figure}

Let $g_A$ ($g_B,g_C$) be $\pi_{1,\dots,4} \circ f[S^4](\pi_{5,\dots,8} \circ f[S^4],\pi_{9,\dots,12}\circ f[S^4])$. If all three, $g_A$, $g_B$, and $g_C$ map to zero on the bottom row, then the lifts $\lbar{g}_A$, $\lbar{g}_B$, $\lbar{g}_C$ are defined and the same $0 = 1$ contradiction from Example \ref{ex:qp3} is again obtained. So in one case, say for $X_A$, it must be that $g_A$ maps not trivially into $H^4(X_A;Q)$. Then dually, $X_A$ contains a rational 4-cycle $Y$ carrying the class $c_A^\ast$. The lifted section $\vert a \rangle$ over $Y$ contradicts $c_2[A][Y] \neq 0$. \qed

Again, we do not know if the no-twist hypothesis can be removed.

\begin{example}[Families of $\hat{\mathrm{GHZ}}$ states from the Borromean rings]\label{ex:ghz}
    It has been noticed \cite{ara97} that the GHZ behaves under partial trace the same way the Borromean rings (from topology) behave under cutting a component. In both cases, everything falls apart; there is no residual entanglement; there is no residual linking.
\end{example}

The purpose of this extended example is to make this analogy precise by showing:
\begin{thm}
    For any $\mathrm{rank} = 2$ TQFT $V$, the state $\psi \in V(T_1^2 \perp\!\!\!\!\perp T_2^2 \perp\!\!\!\!\perp T_3^2) \cong (\C^2)^{\otimes 3}$, determined by the Borromean ring complement, lies in $\hat{\mathrm{GHZ}}$, $\psi \in \hat{\mathrm{GHZ}}$.
\end{thm}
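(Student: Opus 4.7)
The plan is to identify the SLOCC class of $\psi$ via the Majorana (symmetric polynomial) representation of symmetric qubit states, using topological properties of the Borromean ring complement to constrain that representation.

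First, fix the anyon basis $\{|0\rangle, |1\rangle\}$ of each $V(T^2) \cong \C^2$. The Borromean link admits a full $S_3$-symmetry, realized by orientation-preserving self-diffeomorphisms of $S^3$ that cyclically and dihedrally permute the three components, so $\psi$ lies in the symmetric subspace of $(\C^2)^{\otimes 3}$ (modulo a possibly non-trivial action of the $V(T^2)$ mapping class group, which I will absorb into the basis choice). Symmetric 3-qubit states admit a Majorana representation as unordered triples of points on $\C P^1$, and the SLOCC-class intersections with the symmetric subspace correspond, respectively, to: three distinct Majorana points for $\hat{\mathrm{GHZ}}$; one double plus one simple point for $\hat{\mathrm{W}}$; and one triple point for $A$-$B$-$C$. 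Thus it suffices to show the Majorana polynomial of $\psi$ has three distinct roots.

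Second, I would compute the symmetric coefficients of $\psi$ topologically by exploiting the defining Borromean property: removing any one component produces a 2-component unlink, whose complement splits along an embedded 2-sphere. Since $V(S^2) \cong \C$, the TQFT state of the 2-unlink complement factors as a simple tensor in $V(T^2)^{\otimes 2}$. Translated into coordinates, this forces the matrix of slice coefficients $[\psi_{ij0}]_{ij}$ to have rank one, and cyclically for the other two slices. A parallel argument using the non-vacuum pairing $\langle 1_C|\psi\rangle$, in which the ``missing'' third component in the 2-unlink complement represents the commutator $[m_A, m_B]$ of the two remaining meridians (this is the algebraic manifestation of Borromean-ness in $\pi_1$), gives a further constraint. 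Combined with the $S_3$-symmetry, these conditions yield explicit polynomial relations among the Majorana roots of $\psi$.

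Finally, I would verify that these relations force three distinct Majorana roots, equivalently non-vanishing of the Cayley hyperdeterminant of $\psi$. The key topological input is the non-vanishing of the triple Milnor invariant $\mu_{123}$ of the Borromean link, to which a sufficiently non-degenerate rank-$2$ TQFT is sensitive through its braiding matrices. The hard step will be precisely this last one: converting the topological non-vanishing of $\mu_{123}$ into the algebraic non-vanishing of the discriminant of the Majorana polynomial requires a direct computation with the $R$- and $F$-symbols of the rank-$2$ modular tensor category, and this is where Aravind's analogy between Borromean linking and GHZ entanglement becomes a theorem rather than a metaphor.
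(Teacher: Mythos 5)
Your overall framing---symmetrize, pass to the Majorana representation of symmetric three-qubit states, and show the three Majorana roots are distinct---is a legitimate and genuinely different route from the paper's, and the first two stages are sound: the eight filling partition functions depend only on how many components are meridian-filled, so $\psi$ is indeed symmetric in the $\omega_0$-on-meridian / $\omega_0$-on-longitude basis, and the ``delete one component, get an unlink'' observation does force the vacuum pairing on any single factor to yield a simple tensor in the remaining two. The problem is the last step, which you correctly identify as the hard one and then do not carry out. Worse, the input you propose for it---non-vanishing of $\mu_{123}$, detected ``through the braiding matrices'' of a \emph{sufficiently non-degenerate} rank-$2$ theory---is both unproved and not uniform over the class of theories in the statement: the theorem is claimed for \emph{every} rank-$2$ TQFT, and you would need to show that none of them (semion, Fibonacci, \dots) is blind to the relevant structure in exactly the way that would collapse two Majorana roots. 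Nothing in your outline rules out a conspiracy in which the discriminant of the Majorana polynomial vanishes at some particular value of the quantum dimension.

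The paper closes this gap by explicit computation, with a different and more elementary topological input. It never invokes Milnor invariants or $R$- and $F$-symbols; it uses only the four partition functions $Z(S^3)=\delta$, $Z(S^1\times S^2)=1$, $Z(S^1\times S^2\,\#\,S^1\times S^2)=\delta$, and $Z(T^3)=\operatorname{rank}(V)=2$, the last being the one place the Borromean rings, as opposed to the unlink, enter: their $0$-surgery is $T^3$, whereas for the unlink it would be $\#^3(S^1\times S^2)$. These eight numbers are the coordinates of $\psi$ in the non-orthogonal $\omega_0$-basis, and since SLOCC classification is $\operatorname{GL}(2,\C)^{\times 3}$-invariant, non-orthogonality is harmless. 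One then computes $\rho_{BC}$, finds a spanning pair for its range, and asks when that projective line meets the quadric of simple vectors tangentially rather than transversally; this reduces to a single quadratic whose discriminant vanishes iff $\alpha=\delta^2$ satisfies $4\alpha^2-11\alpha+8=0$, which has no real roots. Hence two distinct simple vectors exist in $R(\rho_{BC})$ for every real $\delta$, and $\psi\in\hat{\mathrm{GHZ}}$ uniformly. To salvage your approach you would need to replace the appeal to $\mu_{123}$ by an equally explicit computation of the Majorana discriminant as a function of $\delta$---at which point you would essentially be redoing the paper's calculation in different coordinates.
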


\begin{proof}
    The quantum dimension of $V$ is $\mathcal{D} = \sqrt{\sum d_i^2}$, $d_i$ the dimensions of the various particle types. Of course in the rank 2 case, there are only two particle types. Set $\delta = \frac{1}{\mathcal{D}}$. For any TQFT \cite{walker91} the partition function $Z$ of the following simple manifold may be expressed as:
    \begin{equation}
        Z(S^3) = \delta,\ Z(S^1 \times S^2) = 1,\ Z(S^1 \times S^2\ \#\ S^1 \times S^2) = \delta, \text{ and } Z(T^3) = \operatorname{rank}(V) = 2
    \end{equation}

    There are $2^3 = 8$ closed 3-manifolds obtained by filling the components of the Borromean rings, where each component is filled to kill either the meridian or the longitude. Precisely, the possibilities listed above arise. Filling so as to kill the meridian simply eliminates that component. So, $S^3$ results from three $m$-fillings. The 3 way of having 2 meridional filling yield $S^1 \times S^2$ (which is 0-framed surgery on the unknot). The three 1-meridional filling yields 0-framed surgery on the 2-component unlink, $S^1 \times S^2\ \#\ S^1 \times S^2$. The 8th possibility, three longitude filling, is 0-framed surgery on the Borromean rings which (see \cite{thurston80}) is $T^3$. These 8 partition functions are the coordinates of $\psi$ in the non-orthogonal basis where the projector to the trivial particle, $\omega_0$, is located as meridian or longitude on the three tori of the Borromean rings. Topologically, the projection $\omega_0$ signifies filling so that any simple closed curve labeled by $\omega_0$ bounds a disk (or, thickening, a 2-handle). For a given torus $T^2$, if we use the basis $\{\vert 0 \rangle, \vert 1 \rangle\}$ for $V(T^2)$, $\vert 0 \rangle = \omega_0$ on meridian and $\vert 1 \rangle = \omega_0$ on longitude, we compute that $\langle 0 \vert 0 \rangle = \langle 1 \vert 1 \rangle = Z(S^1 \times S^2) = 1$ and $\langle 0 \vert 1 \rangle = \langle 1 \vert 0 \rangle = Z(S^3) = \delta$, $0 < \delta < 1$, so indeed this basis is \emph{not} orthonormal.

    However, from the definition of SLOCC (line \ref{eq:slocc}), this classification is $\operatorname{GL}(2\,C) \times \operatorname{GL}(2,\C) \times \operatorname{GL}(2,\C)$-invariant. So we may use this basis to determine the SLOCC-class of $\psi$. To do this, compute $\rho_{BC} = \operatorname{tr}_A(\vert \psi \rangle \langle \psi \vert)$. The result is:
    \begin{equation}
        \def\arraystretch{1.2}
        \rho_{BC} =
        \begin{array}{lllll}
            & \vert 0 \rangle \vert 0 \rangle & \vert 0 \rangle \vert 1 \rangle & \vert 1 \rangle \vert 0 \rangle & \vert 1 \rangle \vert 1 \rangle \\
            \langle 0 \vert \langle 0 \vert & \delta^2 + 1 & 2\delta & 2 \delta & \delta^2 + 2 \\
            \langle 0 \vert \langle 1 \vert & 2\delta & \delta^2 + 1 & \delta^2 + 1 & 3\delta \\
            \langle 1 \vert \langle 0 \vert & 2\delta & \delta^2 + 1 & \delta^2 + 1 & 3\delta \\
            \langle 1 \vert \langle 1 \vert & \delta^2+2 & 3\delta & 3\delta & \delta^2+4 \\
            & 
        \end{array}
        \vspace{-1em}
    \end{equation}

    The sum of the first two columns divided by $(\delta+1) = \begin{pmatrix}
        \delta + 1 \\ \delta + 1 \\ \delta+1 \\ \delta+2
    \end{pmatrix}$, subtracting this from the first and dividing by $(\delta-1)$ yields $\begin{pmatrix}
        \delta \\ 1 \\ 1 \\ \delta
    \end{pmatrix}$, so taking the difference of the two, we see range$(\rho_{BC}) =: R_{BC}$ is spanned by $\begin{pmatrix}
        \delta \\ 1 \\ 1 \\ \delta
    \end{pmatrix}$ and  $\begin{pmatrix}
        1 \\ \delta \\ \delta \\ 2
    \end{pmatrix}$.

    Now let us work out whether there are one or two simple vectors (up to a scalar) in $R_{BC}$. The general element of $R_{BC}$ has the form:
    \begin{equation}\label{eq:matrix-form}
        x\begin{pmatrix}
            \delta \\ 1 \\ 1 \\ \delta
        \end{pmatrix} + y\begin{pmatrix}
            1 \\ \delta \\ \delta \\ 2
        \end{pmatrix} = \begin{pmatrix}
            \delta x + y \\ \delta y + x \\ \delta y + x \\ \delta x + 2y
        \end{pmatrix}
        \begin{array}{l}
            \vert 0 \rangle \vert 0 \rangle \\ \vert 0 \rangle \vert 1 \rangle \\ \vert 1 \rangle \vert 0 \rangle \\ \vert 1 \rangle \vert 1 \rangle
        \end{array}
    \end{equation}
    Again, up to a non-zero scalar, simple vectors $\te$ will have one of four forms: $\vert 1 \rangle \otimes \vert 1 \rangle$, $(\vert 0 \rangle + p\vert 1 \rangle) \otimes (q \vert 1 \rangle)$, $(p\vert 1 \rangle) \otimes (\vert 0 \rangle + q \vert 1 \rangle)$, or $(\vert 0 \rangle + p\vert 1 \rangle) \otimes (\vert 0 \rangle + q \vert 1 \rangle)$. The first three cases can be eliminated immediately as not being of form (\ref{eq:matrix-form}). The 4th possibility for a simple vector expands to:
    \begin{equation}
        \te = 1 \vert 0 \rangle \vert 0 \rangle + q \vert 0 \rangle \vert 1 \rangle + p \vert 1 \rangle \vert 0 \rangle + pq \vert 1 \rangle \vert 1 \rangle
    \end{equation}
    Matching to (\ref{eq:matrix-form}), we find: $\delta x + y = 1$, $\delta y + x = q = p$, and $\delta x + 2y = p^2$, so $y = p^2 - 1$, and $\delta y + x = p \implies \delta p^2 - \delta + x = p$, or $x = \delta - \delta p^2 + p$.

    We have a final equation to use: $p^2 = \delta x + 2y$, or $p^2 = \delta(\delta - \delta p^2 + p) + 2(p^2 - 1)$, or
    \begin{equation}
        (1-\delta^2)p^2 + \delta p + \delta^2 - 2 = 0
    \end{equation}

    With $\delta = \frac{1}{\mathcal{D}}$, fixed by the choice of TQFT $V$, this quadratic equation for $p$ has a double root iff
    \begin{equation}\label{eq:double-root}
        \delta^2 - 4(1-\delta^2)(\delta^2-2) = 0
    \end{equation}
    Setting $\alpha = \delta^2$, (\ref{eq:double-root}) implies the quadratic equation
    \begin{align}
        & 4 \alpha^2 - 11 \alpha + 8 = 0, \text{ or } \\
        & \alpha = \frac{11 \pm \sqrt{121-128}}{8}
    \end{align}

    Since $\alpha$ cannot be real and $\delta$ must be, this proves that for all 2D TQFT $V$, $\rho_{BC}$ has two distinct simple vectors (up to scale) and so $\psi \in \hat{\mathrm{GHZ}}$.
\end{proof}

To weave Example \ref{ex:ghz} into the bundle context of the previous examples, note the $\Z_3$, 120 degree rotational symmetry of the Borromean rings. Setting $B = $ Borromean rings complement and $S^\infty$ the colimit of the finite, odd-dimensional spheres under inclusion, consider $(B \times S^\infty) \slash \Delta =: E$, where $\Delta$ is the diagonal action: $120^\circ$ rotation on $B$, and multiplication by the phase $e^{2\pi i/3}$ or $S^\infty$. $E$ is a flat bundle over $S^3 \slash e^{2\pi i/3} = L_3^\infty$, the infinite dimensional dense space with $\pi_1(L_3^\infty) \cong \Z_3$, with fiber B.
\begin{equation}
    \begin{tikzpicture}[baseline=(current bounding box.center)]
        \node at (-2,0.2) {$T^2 \perp\!\!\!\!\perp T^2 \perp\!\!\!\!\perp T^2 \xrightarrow{\de} B \ra E$};
        \draw[->] (0,-0.3) -- (0,-0.8);
        \node at (0,-1.2) {$L_3^\infty$};
        \draw[->] (0.5,-0.5) -- (2.5,-0.5);
        \node at (1.5,-0.75) {\footnotesize{apply $V$}};
        \node at (4,0.1) {$(\C^2)^{\otimes 3} \ra \mathbb{V}$};
        \draw[->] (4.9,-0.3) -- (4.9,-0.8);
        \node at (4.95,-1.2) {$L_3^\infty$};
        \draw[->] (4.6,-1) -- (3.4,-0.3);
        \node at (3.6,-0.9) {\footnotesize{$\psi(B)$}};
    \end{tikzpicture}
\end{equation}

Now we may apply the functor $V$ fiber-wise to produce a $(\C^2)^{\otimes 3}$-bundle $\mathbb{V}$ over $L_3^\infty$. $\mathbb{V}$ is interesting; it is locally the tensor product of three $\C^2$-bundles, but not globally---the factors get permuted by the cyclic symmetry implemented by $\pi_1(L_3^\infty) \cong \Z_3$. However, the local tensor structure is all that we need to discuss the SLOCC-type of the section $\psi(B)$ that the bulk of the fiber induces on $V(\de B) = V(T^2 \perp\!\!\!\!\perp T^2 \perp\!\!\!\!\perp T^2) \cong (\C^2)^{\otimes 3}$. As we have just seen, the section $\psi(B)$ is of type $\hat{\mathrm{GHZ}}$, over all points of the base $L_3^\infty$.

\section{Further possibilities from quantum topology}
In quantum topology the Hilbert space $H(Y)$ of a high genus $(g)$ surface $Y$ is assembled from elementary pieces via a direct sum over charge sectors of tensor products. Entanglement is also a natural concept in this context. Furthermore, these $H(Y)$ naturally bundle over the moduli space $M(g)$ of conformal structures on $Y$. These bundles have the Knizhnik-Zamolodchikov flat connection whose holonomy is the Jones representation, with nonvanishing parallel sections corresponding to fixed vectors. For TQFTs with irreducible representations, the entanglement properties of non-parallel sections are of interest. When the quantum dimension of the theory is sufficiently large, $\dim(H(Y_g)) >> \dim(M(g))$, $g$ large, so there is an abundance to nonvanishing sections to study.

The sum over charge sectors may be formally accomplished by working locally with graded Hilbert space, graded by charge labels on all boundary components of the surface. The concept of entanglement depends on a decomposition. One might first consider what happens when a surface is decomposed into the smallest possible subsurfaces. Any surface may be decomposed into ``pairs of pants'' (sometimes called ``tricons''), and for many TQFTs, e.g.\ all the $\operatorname{SU}(2)_k$ theories, vacuum states (in the purely topological sector) will \emph{all} be unentangled w.r.t.\ a pants decomposition, since for fixed charge labels the Hilbert space $H(\mathrm{pants})$ always has dimension 1. This can equivalently be expressed as: ``fusion channels are unique.'' Thus to study topological ground state entanglement of pants decompositions (in a purely topological context as opposed to the hybrid context of \cite{kp06}) we will need to consider theories, such as $\operatorname{SU}(3)_3$, with non-trivial channel multiplicities.

\section{Some thoughts on applications}
We present some initial thoughts on applications of these ideas in physics and quantum information.

Consider a two electrons moving in some periodic $d$-dimensional crystal lattice.  Following \cite{kit09}, in the absence of interaction we can describe each band by some (possibly nontrivial) bundle over the momentum torus $\mathbb{T}^d$.
These bundles have fiber $\mathbb{C}^n$ for some $n$, where the fiber describes some internal degrees of freedom (e.g., spin or valley degrees of freedom), depending on the particular crystal.
Now suppose we add some weak interaction between these electrons, sufficiently weak that the electrons each remain in the given band.  If the interaction is translationally invariant, the total crystal momentum (i.e., the sum of coordinates in the two momentum tori, modulo $2\pi$) is conserved.  Consider wavefunctions with given total crystal momentum (e.g. total crystal momentum $0$), any such wavefunction is then a section in a product bundle: if the total crystal momentum is zero, then the coordinate in one base space is minus the coordinate in the other, so the base is still $\mathbb{T}^d$.  Then, in some cases we can say that any such wavefunction either is entangled in the internal degrees of freedom or has some zero somewhere.

This particular method of implementing product bundles 
gives something more general in the case of a product of three or more bundles.  If we had instead three electrons, then only the total momentum is conserved, and so the base is now $\mathbb{T}^{2d}$.
Indeed, the usual tensor product of two bundles is an outer tensor product pulled back by a diagonal inclusion, and in this case of three electrons we have an outer product of three bundles pulled back to a $\mathbb{T}^{2d}$ in $\mathbb{T}^{3d}$ where the coordinates add to zero in triples.
However, we may imagine implementing higher product bundles by considering a system dependent on some parameters.  If we have several such systems, dependent on some parameters, we have a product bundle, and if then we allow those parameters themselves to become quantum degrees of freedom, again a wavefunction defines a section and our results may imply either vanishing of the wavefunction somewhere or entanglement.

Another speculative application is to quantum codes.  It has been argued\cite{gottesman2013fibre} that quantum codes can be described by section of a fiber bundle, where the base corresponds to a choice of stabilizers of the code and the fiber describes the encoded logical information.  It was conjectured that fault tolerant operations correspond to flat connections on this bundle.  Then, a product bundle naturally appears by considering several copies of the same code.  A nowhere-vanishing section is, up to phase, a projector that projects onto a given state of the logical operators.  Thus, in some cases it may be possible to say something about the complexity of certain logical operations implemented in a fault tolerant way on these quantum codes.

\bibliography{references}

\end{document}